\documentclass[conference]{IEEEtran}
\usepackage{amsmath,graphicx}
\usepackage{color}
\usepackage{graphicx}
\usepackage{epstopdf}
\usepackage{amsmath}
\usepackage{amssymb}
\usepackage{mathrsfs}
\usepackage[bookmarks=false, pdfpagelabels=false]{hyperref}
\usepackage{tikz}
\usepackage{bm}
\usepackage[english]{babel}
\usepackage{cite}
\usepackage{rotfloat}
\usepackage{mathtools}
\usepackage[font=normalsize,labelfont=bf]{caption}
\usepackage{amsmath}
\usepackage{makecell}
\usepackage{multirow}
\usepackage{booktabs}
\usepackage{colortbl}
\usepackage{multirow}% http://ctan.org/pkg/multirow
\usepackage{hhline}% http://ctan.org/pkg/hhline
\usepackage{stfloats}% <-- added
\usepackage{cuted} %phuong
\usepackage{multicol}
\usepackage{bbm}
\usepackage{cases}
\graphicspath{ {Figures/} }
\newsavebox{\foobox}

\definecolor{kugray5}{RGB}{224,224,224}

\usepackage[normalem]{ulem}
\newcommand\rsout{\bgroup\markoverwith
	{\textcolor{red}{\rule[0.5ex]{2pt}{0.8pt}}}\ULon}



\makeatletter
\newcommand{\ALOOP}[1]{\ALC@it\algorithmicloop\ #1%
	\begin{ALC@loop}}
	\newcommand{\ENDALOOP}{\end{ALC@loop}\ALC@it\algorithmicendloop}

\makeatother

\usepackage{etoolbox}
\let\mybibitem\bibitem
\renewcommand{\bibitem}[1]{%
	\ifstrequal{#1}{nature}
	{\color{blue}\mybibitem{#1}}
	{\color{black}\mybibitem{#1}}%
}

\graphicspath{ {Figures/} }

\newtheorem{theorem}{\textbf{Theorem}}

\DeclareCaptionLabelSeparator{periodspace}{.\quad}

\addto\captionsenglish{}
\allowdisplaybreaks
\usepackage{setspace}
\newcommand{\mI}{\textbf{\textbf{I}}}

\newcommand{\vx}{{\mathbf{x}}}
\newcommand{\vy}{{\mathbf{y}}}

\newcommand{\vv}{{\mathbf{v}}}

\newcommand{\vu}{{\mathbf{u}}}
\newcommand{\vz}{{\mathbf{z}}} 
\newcommand{\vh}{{\mathbf{h}}}

\newcommand{\vw}{{\mathbf{w}}}

\def\b0{{\pmb{0}}}

\usepackage{mdframed} % For the framed box

\usepackage{caption} %phuong
\usepackage{subcaption} %phuong
\usepackage[ruled,vlined]{algorithm2e} % for \KwIn, \KwOut, \For, \tcp, etc.

\begin{document}

\title{Deep-Unfolded Accelerated Projected Gradient for Energy-Efficient Cell-Free Massive MIMO}

\author{
    \IEEEauthorblockN{Phuong~Nam~Tran\IEEEauthorrefmark{1}, Nhan~Thanh~Nguyen\IEEEauthorrefmark{1}, Hien~Quoc~Ngo\IEEEauthorrefmark{2}, Markku~Juntti\IEEEauthorrefmark{1},}
    \IEEEauthorblockA{
    \IEEEauthorrefmark{1}Centre for Wireless Communications (CWC), University of Oulu, Finland, \\
    \IEEEauthorrefmark{2}School of Electronics, Electrical Engineering and Computer Science, Queen’s University Belfast, UK
    \\\{phuong.tran, nhan.nguyen, markku.juntti\}@oulu.fi, hien.ngo@qub.ac.uk
    }
    }
\maketitle

\begin{abstract}

This paper investigates  energy efficiency (EE) maximization for the downlink of cell-free massive multiple-input multiple-output  systems under quality-of-service and per-access point power constraints. We first derive closed-form gradient expressions of the objective function with respect to the power allocation coefficients, and then propose an accelerated projected gradient (APG) approach to solve this problem.
To reduce the computational complexity and runtime of APG, we propose a deep-unfolded APG framework that maps iterative APG updates onto a finite number of neural network layers, where parameters such as step sizes and penalty coefficients are learned from data. The proposed approach produces power allocation solutions through a fixed number of gradient-based updates without the need for line search or manual parameter tuning. 
Numerical results show that the method achieves EE performance comparable to the iterative APG approach while requiring significantly lower computational cost, with up to a 30-fold reduction in floating-point operations under the considered system settings.
% Numerical results for a system with 20 APs, each equipped with 4 antennas and serving 16 single-antenna users, show that the proposed unfoled APG method achieves comparable EE to the iterative APG while reducing the number of floating-point operations by up to 30 times

% Conventional model-based approaches typically rely on iterative optimization with high computational complexity.

\end{abstract}

\begin{IEEEkeywords}
Cell-free massive MIMO, energy efficiency, power control,
accelerated proximal gradient, deep unfolding.
\end{IEEEkeywords}

\IEEEpeerreviewmaketitle

\section{Introduction}

Cell-free massive multiple-input multiple-output (CF-mMIMO) has emerged as a promising architecture for future wireless networks \cite{ngo2017total, ngo2024ultradense}. By eliminating cell boundaries, CF-mMIMO improves spectral efficiency (SE) and provides a more uniform service quality compared with conventional cellular architectures. In such systems, energy efficiency (EE) is a key design objective, since dense deployments of access points (APs) lead to substantial network power consumption\cite{jayaweera2024minimizing, ngo2024ultradense}. Therefore, power control plays a central role in improving the EE of CF-mMIMO systems~\cite{mai2022apg}.

% Existing works on power allocation in CF-mMIMO systems have considered objectives such as power consumption minimization, SE maximization, and EE maximization \cite{ngo2017total, mai2022apg, van2020joint, yan2025efficient}. Most existing methods typically rely on sequential convex approximation (SCA), which solves a sequence of convex subproblems. 

Existing works on power allocation in CF-mMIMO systems have considered objectives such as power consumption minimization, SE maximization, and EE maximization \cite{ngo2017total, mai2022apg, van2020joint, yan2025efficient}. Most existing methods typically rely on sequential convex approximation (SCA) methods. Although such methods can achieve near-optimal performance, they require solving convex subproblems at each iteration, resulting in high computational complexity. To reduce this burden, accelerated projected gradient (APG) methods have been proposed \cite{mai2022apg, farooq2021utility, yan2024joint}, which offer lower per-iteration complexity. However, these methods still require multiple iterations with line search or parameter tuning, leading to high and variable computational cost. Additionally, prior work has focused on maximum-ratio transmission (MRT) precoding~\cite{mai2022apg}, while partial zero-forcing (PZF) precoding, which can achieve higher SE \cite{interdonato2020local}, has not been explored.

% Alternatively, scalable strategies have been proposed~\cite{bjornson2020scalable, buzzi2019user}, including heuristic power allocation, AP selection, and problem decomposition. While these approaches improve scalability, they 
% % often rely on simplified allocation rules that 
% may not fully capture the coupling among variables, particularly in EE maximization under QoS constraints.
% \blue{ In addition, the EE maximization problem involves strong coupling among optimization variables through the signal-to-interference-plus-noise ratio (SINR), while quality-of-service (QoS) constraints further restrict the feasible set, making the problem particularly challenging to solve efficiently.}
% In EE maximization under QoS constraints, such coupling is significant, as the power allocation of each user affects the interference experienced by others.

In CF-mMIMO power control, deep learning (DL) has been explored to reduce computational complexity by replacing iterative optimizers with direct mappings from system parameters such as large-scale fading coefficients to power allocation variables~\cite{rajapaksha2021deep, salaun2022gnn, mishra2024graph}. However, these approaches are typically designed as black-box models, and it is difficult to explicitly enforce system constraints such as QoS and per-AP power constraints. Model-based DL, particularly deep unfolding, provides an alternative by embedding the structure of iterative optimization algorithms into trainable networks \cite{gao2023hybrid, xu2023algorithm, yang2025deep, nguyen2023deep, nguyen2024joint, xu2024leakage, lin2024communication}. This approach retains the structure of conventional optimization algorithms while replacing iterative procedures with a finite number of layers, resulting in fixed computational cost and enhanced interpretability. 
Deep unfolding has been applied in CF-mMIMO to tasks such as precoding or beamforming design \cite{ gao2023hybrid, xu2023algorithm, yang2025deep}, leakage-rate maximization \cite{xu2024leakage}, and activity detection \cite{lin2024communication}. However, these works do not consider EE maximization under QoS and per-AP power constraints.

In this paper, we consider EE maximization for the downlink of a CF-mMIMO system. We first develop an APG-based framework for the EE maximization problem under QoS and per-AP power constraints, where local PZF precoding is employed. However, the resulting APG method remains iterative and requires multiple iterations with step-size and parameter tuning, leading to high computational cost. To address this limitation, we propose a deep-unfolded APG approach by mapping the iterative updates of APG method into a finite number of trainable layers. In the proposed model, the step sizes are learned from data, eliminating the need for line search and iterative tuning. The resulting unfolded model computes the power allocation through a fixed number of gradient-based updates, achieving fixed computational complexity and fast convergence, while explicitly enforcing the per-AP power constraints and incorporating QoS requirements through a penalty-based formulation. Numerical results show that the proposed method achieves EE performance comparable to iterative APG while requiring significantly lower computational complexity.

% =============================================================================
\section{System Model and Problem Formulation}
\label{sec:system}
% =============================================================================
\subsection{System Model}
\label{subsec:system_model}

We consider the downlink of a CF-mMIMO system operating in time-division duplex (TDD) mode. The network consists of $L$ APs, each equipped with $M$ antennas, that jointly serve $K$ single-antenna user equipments (UEs) over the same time-frequency resource, with $LM \gg K$~\cite{interdonato2020local}. All APs are connected to a central processing unit (CPU) via fronthaul links. 
% We assume independent Rayleigh fading channels and standard MMSE-based channel estimation with pilot reuse.
The channel between AP~$l$ and UE~$k$ is denoted by $\vh_{l,k} \in \mathbb{C}^{M \times 1}$ and follows $\vh_{l,k} \sim \mathcal{CN}(\mathbf{0}, \beta_{l,k}\mI_M)$, where $\beta_{l,k}$ represents the large-scale fading coefficient. During uplink training, each UE transmits a pilot sequence $\boldsymbol{\phi}_{i_k} \in \mathbb{C}^{\tau_p}$ with $\|\boldsymbol{\phi}_{i_k}\|^2 = \tau_p$, where $i_k$ is the pilot index and $\tau_p$ is the pilot length. We assume $\tau_p \le K$, and thus pilot reuse occurs. The set of UEs sharing the same pilot as UE~$k$ is denoted by $\mathcal{P}_k = \{t : i_t = i_k\}$, which leads to pilot contamination.
Let $p_k$ denote the uplink transmit power of UE~$k$. The MMSE estimate of $\vh_{l,k}$, denoted by $\hat{\vh}_{l,k}$, has zero mean and covariance $\gamma_{l,k}\mI_M$, where
% Let $p_k$ denote the uplink normalized transmit power of UE~$k$. The MMSE estimate of $\vh_{l,k}$ is denoted by $\hat{\mathbf{h}}_{l,k}$ and is obtained by correlating the received pilot signal with $\boldsymbol{\phi}_{i_k}$. The estimate has zero mean and covariance $\gamma_{l,k}\mI_M$, where
\begin{equation}
    \gamma_{l,k}
    = \frac{\tau_p p_k \beta_{l,k}^{2}}
           {\tau_p \sum_{t \in \mathcal{P}_k} p_t \beta_{l,t} + 1}.
    \label{eq:gamma}
\end{equation}
% The estimation error $\tilde{\mathbf{h}}_{l,k} = \vh_{l,k} - \hat{\mathbf{h}}_{l,k}$ is independent of $\hat{\mathbf{h}}_{l,k}$ and has covariance $(\beta_{l,k}-\gamma_{l,k})\mI_M$. 

% Each AP~$l$ employs local partial zero-forcing (PZF) precoding~\cite{interdonato2020local}. Specifically, the set of UEs is partitioned into a strong set $\mathcal{S}_l$ and a weak set $\mathcal{W}_l = \{1,\ldots,K\} \setminus \mathcal{S}_l$ based on the large-scale fading coefficients $\{\beta_{l,k}\}$. At AP $l$, full zero-forcing (ZF) is applied to UEs in $\mathcal{S}_l$, while maximum-ratio transmission (MRT) is used for UEs in $\mathcal{W}_l$. Since ZF is applied over pilot subspaces, if $k \in \mathcal{S}_l$, then all UEs sharing the same pilot as UE~$k$ also belong to $\mathcal{S}_l$. Let $\tau_{\mathcal{S}_l}$ denote the number of distinct pilots assigned to $\mathcal{S}_l$. To apply ZF, it is required that $M > \tau_{\mathcal{S}_l}$.

We assume that each AP~$l$ employs PZF precoding, as it provides very good performance~\cite{interdonato2020local}. The set of UEs is partitioned into a strong set $\mathcal{S}_l$ and a weak set $\mathcal{W}_l = \{1,\ldots,K\} \setminus \mathcal{S}_l$ based on the large-scale fading coefficients ${\beta_{l,k}}$. Zero-forcing (ZF) is applied to UEs in $\mathcal{S}_l$ to suppress interference within pilot subspaces, while MRT is used for UEs in $\mathcal{W}_l$ to reduce computational complexity. This structure allows interference suppression for UEs with strong channels while maintaining low complexity for the remaining UEs.
Since ZF is applied over pilot subspaces, if $k \in \mathcal{S}_l$, then all UEs sharing the same pilot as UE~$k$ belong to $\mathcal{S}_l$. Let $\tau_{\mathcal{S}_l}$ denote the number of distinct pilots assigned to $\mathcal{S}_l$, and the condition $M > \tau_{\mathcal{S}_l}$ is required for ZF.

The downlink transmit signal from AP~$l$ is $\vx_l = \sum_{k=1}^{K}\sqrt{\rho_{l,k}}\,\vw_{l,k} q_k,$ where $\rho_{l,k} \ge 0$ denotes the transmit power allocated by AP~$l$ to UE~$k$, $\vw_{l,k}$ is the precoding vector, and $q_k \sim \mathcal{CN}(0,1)$ is the data symbol. An achievable downlink SE for UE~$k$ under local PZF is given by
% \begin{equation}
$
    \mathrm{SE}_k
    = \frac{\tau_c - \tau_p}{\tau_c}
    \log_2\!\Bigl(1 + \mathrm{SINR}_k\Bigr),
    \label{eq:se}
$
% \end{equation}
where $\tau_c$ is the coherence block length. The corresponding signal-to-interference-plus-noise ratio (SINR) is given by \cite{interdonato2020local}
\begin{equation}
\mathrm{SINR}_k
=
\frac{
\Bigl(\sum\limits_{l=1}^{L}
  a_{lk}\sqrt{\rho_{l,k}}\Bigr)^2
}{
\sum\limits_{t\in\mathcal{P}_k\setminus\{k\}}
\Bigl(\sum\limits_{l=1}^{L}
  b_{l,k,t}\sqrt{\rho_{l,t}}\Bigr)^2
+\sum\limits_{t=1}^{K} \sum\limits_{l=1}^{L} d_{l,k,t} \rho_{l,t} 
+1
},
\label{eq:sinr}
\end{equation}
where $\delta_{l,k}$ equals one if $k \in \mathcal{S}_l$, and zero otherwise, $a_{l,k} \triangleq \sqrt{(M - \delta_{l,k}\tau_{\mathcal{S}_l})\gamma_{l,k}}$, $b_{l,k,t} \triangleq \sqrt{(M - \delta_{l,t}\tau_{\mathcal{S}_l})\gamma_{l,k}}$ and $d_{l,k,t} \triangleq \beta_{l,k} - \delta_{l,k}\delta_{l,t}\gamma_{l,k}$.
% \begin{figure*}[!b]
% \normalsize
% \begin{equation}
% \mathrm{SINR}_k
% =
% \frac{
% \Bigl(\sum_{l=1}^{L}
%   \sqrt{(M-\delta_{l,k}\tau_{\mathcal{S}_l})\rho_{l,k}\,\gamma_{l,k}}\Bigr)^2
% }{
% \sum_{t\in\mathcal{P}_k\setminus\{k\}}
% \Bigl(\sum_{l=1}^{L}
%   \sqrt{(M-\delta_{l,t}\tau_{\mathcal{S}_l})\rho_{l,t}\,\gamma_{l,k}}\Bigr)^2
% +\sum_{t=1}^{K}\sum_{l=1}^{L}\rho_{l,t}
%   \bigl(\beta_{l,k}-\delta_{l,k}\delta_{l,t}\gamma_{l,k}\bigr)
% +1
% }.
% \label{eq:sinr}
% \end{equation}
% \end{figure*}

\subsection{Energy Efficiency Problem Formulation}
\label{subsec:problem}

The total power consumption of the considered CF-mMIMO system is modeled as \cite{ngo2017total,mai2022apg}
\begin{equation}
    P_{\mathrm{total}}
    = \sum_{l=1}^{L}\frac{N_0}{\alpha_l}\sum_{k=1}^{K}\rho_{l,k}
    + \bar{P}_{\mathrm{fix}}
    + B \bar{P}_{\mathrm{bt}} \sum_{k=1}^{K} \mathrm{SE}_k,
    \label{eq:Ptotal}
\end{equation}
where $\alpha_l \in (0,1]$ denotes the power-amplifier efficiency at AP~$l$, $N_0$ is the noise power, and $B$ is the system bandwidth. The first term represents the transmit power, where $N_0$ converts the normalized power $\rho_{l,k}$ to physical units. The scalar $\bar{P}_{\mathrm{fix}} \triangleq \sum_{l=1}^{L} (M P_{\mathrm{tc},l} + P_{0,l})$ captures the static circuit and fronthaul power across all APs, where $P_{\mathrm{tc},l}$ is the per-antenna circuit power and $P_{0,l}$ is the fixed fronthaul power at AP~$l$. The last term models the traffic-dependent fronthaul power, where $P_{\mathrm{bt},l}$ denotes the traffic-dependent coefficient at AP~$l$, and $\bar{P}_{\mathrm{bt}} \triangleq \sum_{l=1}^{L} P_{\mathrm{bt},l}$.

The EE of the network, measured in bits per joule, is defined as the ratio between the total achievable data rate and the total power consumption, given by
% \begin{equation}
$
    \mathrm{E_e}
    = \frac{B\sum_{k=1}^{K}\mathrm{SE}_k}{P_{\mathrm{total}}}.
    \label{eq:EE}
$
% \end{equation}
The EE maximization problem is formulated as 
\begin{subequations}
\label{eq:P1}
\begin{align}
    (\mathcal{P}_1):\quad
    \max_{\{\rho_{l,k}\}} \quad
    & \mathrm{E_e} \label{eq:P1_obj} \\
    \text{s.t.} \quad
    & \mathrm{SE}_k \ge S^{\mathrm{min}}_k, \quad k = 1,\ldots,K, \label{eq:P1_qos} \\
    & \sum_{k=1}^{K}\rho_{l,k} \le \rho_l^{\max}, \quad l = 1,\ldots,L,
      \label{eq:P1_power} \\
    & \rho_{l,k} \ge 0, \quad \forall l, k. \label{eq:P1_nonneg}
\end{align}
\end{subequations}
The QoS constraints in~\eqref{eq:P1_qos} enforce minimum SE requirements $S^{\mathrm{min}}_k$ for all UEs, while~\eqref{eq:P1_power} limits the transmit power at each AP. 
Problem $(\mathcal{P}_1)$ is non-convex due to the fractional structure of the objective function and the coupling of the optimization variables in the SINR expressions% , which appear in both the objective and the QoS constraints
. As a result, the problem is challenging to solve using conventional optimization techniques. In the next section, we reformulate $(\mathcal{P}_1)$ to enable efficient gradient-based optimization and develop an APG-based solution together with a deep-unfolded model.
% =============================================================================
\section{Proposed Deep-Unfolded APG Method }
\label{sec:proposed}
% =============================================================================

To address $(\mathcal{P}_1)$, we first develop an APG-based optimization framework and then construct a deep-unfolded network by mapping APG iterations into trainable deep neural network (DNN) layers.%, replacing the iterative procedure with a model with a fixed number of layers in which the step sizes and penalty parameters are learned.
% -----------------------------------------------------------------------------
\subsection{APG Optimization Framework}
\label{subsec:apg}
% -----------------------------------------------------------------------------
\subsubsection{Problem Reformulation}
The optimization problem $(\mathcal{P}_1)$ is challenging to solve directly due to its non-convex objective and QoS constraints. To enable the application of the APG method, we first reformulate the problem. Specifically, we introduce variable $\theta_{l,k} = \sqrt{\rho_{l,k}} \ge 0$ to avoid square-root terms in the SINR and QoS expressions and facilitate more efficient projection and gradient updates. Under this transformation, the per-AP power constraint in~\eqref{eq:P1_power} becomes
$
\sum_{k=1}^{K} \theta_{l,k}^2 \le \rho_l^{\max}.
$
Let $\boldsymbol{\theta}_l \triangleq [\theta_{l,1}, \ldots, \theta_{l,K}]^T$ denote the power allocation vector for AP $l$, and let $\boldsymbol{\theta} \in \mathbb{R}^{L \times K}$ stack all such vectors across APs. The feasible sets for the power allocation variables $\boldsymbol{\theta}_l$ and $\boldsymbol{\theta}$ are defined as
$
\mathcal{C}_l
= \left\{\boldsymbol{\theta}_l \in \mathbb{R}^K :
\theta_{l,k} \ge 0,\ \forall k,\ 
\sum_{k=1}^{K} \theta_{l,k}^2 \le \rho_l^{\max} \right\}
$
and
$
\mathcal{C}
= \left\{ \boldsymbol{\theta} \in \mathbb{R}^{L \times K} :
\boldsymbol{\theta}_l \in \mathcal{C}_l,\ \forall l \right\},
$
respectively.

Let 
$
A_{k,t} \triangleq \sum_{l=1}^{L} b_{l,k,t}\,\theta_{l,t}, \quad
A_k \triangleq \sum_{l=1}^{L} a_{l,k}\,\theta_{l,k},
$
$
I_k \triangleq \sum_{t \in \mathcal{P}_k \setminus\{k\}} A_{k,t}^2
+ \sum_{t=1}^{K}\sum_{l=1}^{L} d_{l,k,t}\,\theta_{l,t}^2 + 1.
$
With these definitions, the SINR can then be written as $\mathrm{SINR}_k = A_k^2 / I_k$, and the QoS constraint in~\eqref{eq:P1_qos} is equivalent to
\begin{equation}
    A_k \ge \sqrt{\bar{S}^{\mathrm{min}}_{k}}\,\sqrt{I_k},
\end{equation}
where $\bar{S}^{\mathrm{min}}_{k} \triangleq 2^{S^{\mathrm{min}}_k\tau_c/(\tau_c-\tau_p)} - 1$.
The QoS constraints introduce non-convex coupling between the variables and cannot be handled directly within the APG updates. Therefore, we incorporate them into the objective using a penalty formulation. Specifically, we define
\begin{equation}
    g_k(\boldsymbol{\theta})
    \triangleq \sqrt{\bar{S}^{\mathrm{min}}_{k}\,I_k} - A_k,
    \label{eq:gk}
\end{equation}
which is non-positive when the QoS constraint is satisfied. The reformulated problem is given by\begin{equation}
    (\mathcal{P}_2):
    \max_{\boldsymbol{\theta} \in \mathcal{C}}
    \quad
    f_\xi(\boldsymbol{\theta})
    \triangleq
    \mathrm{EE}(\boldsymbol{\theta})
    - \xi \sum_{k=1}^{K} \bigl[\max(0,\,g_k(\boldsymbol{\theta}))\bigr]^2,
\end{equation}
where $\xi > 0$ is a penalty parameter that controls the strength of constraint enforcement. 
% The penalty term enforces the QoS constraints, with larger values of $\xi$ imposing stronger penalties on constraint violations.

\subsubsection{APG Update Rules and Projection}

To solve $(\mathcal{P}_2)$, we apply the APG method~\cite{mai2022apg}. %Unlike the projected gradient method, which determines each step using only the current iterate, APG also uses the change between consecutive iterates to guide the update. This allows the method to continue moving along the direction of the previous update instead of relying solely on local gradient information, resulting in faster convergence.
Unlike the projected gradient method, which computes each update based solely on the current iterate, APG additionally exploits the difference between successive iterates to capture the recent search direction. This allows the method to continue moving along the direction of the previous update instead of relying solely on local gradient information, thereby accelerating convergence.

Let $\boldsymbol{\theta}^{(n)} \in \mathcal{C}$ denote the power allocation vector at iteration $n$. At each iteration $n \geq 1$, the point $\vy^{(n)}$, which incorporates information from previous updates, is constructed as
\begin{equation}
    \vy^{(n)}
    = \boldsymbol{\theta}^{(n)}
    + \frac{s_{n-1}}{s_n}
    \bigl(\vz^{(n)} - \boldsymbol{\theta}^{(n)}\bigr)
    + \frac{s_{n-1} - 1}{s_n}
    \bigl(\boldsymbol{\theta}^{(n)} - \boldsymbol{\theta}^{(n-1)}\bigr),
    \label{eq:accelarated_point}
\end{equation}
where $s_n = (1 + \sqrt{1 + 4s_{n-1}^2})/2$ controls the contribution of previous updates. The term $\boldsymbol{\theta}^{(n)} - \boldsymbol{\theta}^{(n-1)}$ represents the difference between consecutive iterates and allows the update to follow the direction of the previous step. The initialization is given by $\boldsymbol{\theta}^{(0)} \in \mathcal{C}$, $\boldsymbol{\theta}^{(1)} = \vz^{(1)} = \boldsymbol{\theta}^{(0)}$, and $s_0 = s_1 = 1$.
% To balance progress and reliable improvement, two updates are computed \Nhan{I commented many times already, don't write like this. For this sentence, it can be simply written as ``To balance progress and reliable improvement, $\vz$ and $\vv$ are updated as:"} at each iteration as
To balance progress and reliable improvement, $\vz^{(n+1)}$ and $\vv^{(n+1)}$ are updated as
\begin{align}
    \vz^{(n+1)}
    &= \mathcal{P}_{\mathcal{C}}\!\bigl(
       \vy^{(n)} + \alpha_{y,n}\,\nabla f_\xi(\vy^{(n)})
       \bigr), \label{eq:z}
       \\
    \vv^{(n+1)}
    &= \mathcal{P}_{\mathcal{C}}\!\bigl(
       \boldsymbol{\theta}^{(n)} + \alpha_{\theta,n}\,
       \nabla f_\xi(\boldsymbol{\theta}^{(n)})
       \bigr), \label{eq:v}
\end{align}
where $\alpha_{y,n}$ and $\alpha_{\theta,n}$ are step sizes. The update $\vz^{(n+1)}$ incorporates information from previous iterations and promotes faster progress, while $\vv^{(n+1)}$ relies only on the current gradient and ensures stable improvement.
The update yielding the higher objective value is selected as the next iterate, i.e.,
\begin{equation}
    \label{eq:selection_power}
    \boldsymbol{\theta}^{(n+1)}
    =
    \begin{cases}
        \vz^{(n+1)}, & f_\xi(\vz^{(n+1)}) \ge f_\xi(\vv^{(n+1)}), \\
        \vv^{(n+1)}, & \text{otherwise}.
    \end{cases}
\end{equation}
% which ensures non-decreasing objective values over iterations. 
The projection $\mathcal{P}_{\mathcal{C}}(\cdot)$ enforces the per-AP power constraints. Since $\mathcal{C}$ is separable across APs, the projection is applied independently to each $\boldsymbol{\theta}_l$. For $\vu_l \in \mathbb{R}^K$, the projection onto $\mathcal{C}_l$ is
\begin{equation}
    \mathcal{P}_{\mathcal{C}_l}(\vu_l)
    =
    \min\!\left(1,\,
        \frac{\sqrt{\rho_l^{\max}}}{\|\tilde{\vu}_l\|}
    \right)
    \tilde{\vu}_l,
    \label{eq:proj}
\end{equation}
where $\tilde{\vu}_l$ is obtained by setting negative entries of $\vu_l$ to zero.
% , i.e., $\tilde{u}_{l,k} = \max(u_{l,k}, 0)$.

\subsubsection{Gradient of the Objective Function}
\label{subsubsec:gradient}

The APG updates in~\eqref{eq:z}–\eqref{eq:v} require the gradient of $f_\xi(\boldsymbol{\theta})$. We present its closed-form gradient in the following theorem. %To enable efficient evaluation of these updates, we derive closed-form expressions for the gradient.
% , which avoid numerical differentiation and allow low-complexity implementation.
% Define the penalty term as
% $
%     \Psi(\boldsymbol{\theta}) \triangleq \sum_{k=1}^{K} \bigl[\max(0, g_k(\boldsymbol{\theta}))\bigr]^2.
% $
% The gradient of $f_\xi(\boldsymbol{\theta})$ is given by the following result.
\begin{theorem}
\label{thm:gradient}
Let $u \triangleq \sum_{k=1}^{K} \mathrm{SE}_k$ denote the sum SE and let $\Psi(\boldsymbol{\theta}) \triangleq \sum_{k=1}^{K} \bigl[\max(0, g_k(\boldsymbol{\theta}))\bigr]^2$. Furthermore, we rewrite the total power consumption as $P_{\mathrm{total}} = \tilde{P} + B\bar{P}_{\mathrm{bt}}\,u,$ where
% \begin{equation}
$
    \tilde{P} \triangleq \bar{P}_{\mathrm{fix}} + \sum_{l=1}^{L}\sum_{k=1}^{K}({N_0}/{\alpha_l})\theta_{l,k}^2.
$
% \end{equation}
Then, the gradient of $f_\xi(\boldsymbol{\theta})$ with respect to $\boldsymbol{\theta} \in \mathbb{R}^{L \times K}$ is given by
\begin{equation}
    \nabla f_\xi = \nabla \mathrm{E_e} - \xi\,\nabla \Psi,
\end{equation}
where
\begin{align}
    \nabla \mathrm{E_e}
    &=
    \frac{B}{P_{\mathrm{total}}^2}
    \Bigl(\tilde{P}\,\nabla u
    - 2u N_0\,\boldsymbol{\alpha}^{-1} \odot \boldsymbol{\theta}\Bigr),
    \label{eq:grad_Ee}\\
    \nabla u
    &=
    \sum_{k=1}^{K}
    \frac{\tau_c-\tau_p}{\tau_c \ln 2 \,(A_k^2 + I_k)}
    \bigl(\nabla A_k^2 - \mathrm{SINR}_k\,\nabla I_k\bigr),
    \label{eq:grad_u}\\
    \nabla \Psi
    &=
    \sum_{k:\, g_k > 0}
    2\,g_k\,\nabla g_k.
    \label{eq:grad_Psi}
\end{align}
Here $[\boldsymbol{\alpha}^{-1}]_{l,k} = 1/\alpha_l$, and $\odot$ denotes element-wise multiplication. 
Furthermore, $A_k$, $I_k$, and $g_k$ are scalar functions of $\boldsymbol{\theta}$, while their gradients $\nabla A_k^2$, $\nabla I_k$, and $\nabla g_k$ are matrices in $\mathbb{R}^{L \times K}$ whose entries are defined in~\eqref{eq:app_grad_DS}, \eqref{eq:app_grad_Ik}, and \eqref{eq:app_grad_gk}, respectively.

% \begin{align}
%     [\nabla A_k^2]_{m,t}
%     &= 2\,a_{m,k}\,A_k \cdot \mathbf{1}_{\{t=k\}},
%     \label{eq:grad_DS}\\
%     [\nabla I_k]_{m,t}
%     &= 2\,b_{m,k,t}\,A_{k,t}\cdot\mathbf{1}_{\{t\in\mathcal{P}_k\setminus\{k\}\}}
%     + 2\,d_{m,k,t}\,\theta_{m,t},
%     \label{eq:grad_Ik}\\
%     [\nabla g_k]_{m,t}
%     &= \frac{\sqrt{\bar{S}^{\mathrm{min}}_{k}}}{\sqrt{I_k}}
%     \Bigl(b_{m,k,t}\,A_{k,t}\cdot\mathbf{1}_{\{t\in\mathcal{P}_k\setminus\{k\}\}}
%     + d_{m,k,t}\,\theta_{m,t}\Bigr)
%     \nonumber\\
%     &\quad - a_{m,k}\cdot\mathbf{1}_{\{t=k\}},
%     \label{eq:grad_gk}
% \end{align}
% where $\mathbf{1}_{\{\cdot\}}$ takes the value $1$ if the condition is satisfied and $0$ otherwise.
\end{theorem}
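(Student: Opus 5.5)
The plan is to prove the theorem by direct differentiation, using linearity, the quotient rule, and the chain rule, and treating each entry $\theta_{m,t}$ of $\boldsymbol{\theta}$ as an independent variable.

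\emph{Split the objective.} Since $f_\xi = \mathrm{E_e} - \xi\Psi$, linearity gives $\nabla f_\xi = \nabla \mathrm{E_e} - \xi\nabla\Psi$, and the two parts can be handled separately.

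\emph{Gradient of $\mathrm{E_e}$.} Write $\mathrm{E_e} = Bu/P_{\mathrm{total}}$ with $P_{\mathrm{total}} = \tilde P + B\bar P_{\mathrm{bt}}u$. Because $\partial\tilde P/\partial\theta_{l,k} = 2(N_0/\alpha_l)\theta_{l,k}$, we have
\[
\nabla P_{\mathrm{total}} = 2N_0\,\boldsymbol{\alpha}^{-1}\odot\boldsymbol{\theta} + B\bar P_{\mathrm{bt}}\nabla u .
\]
The quotient rule gives
\[
\nabla\mathrm{E_e} = \frac{B}{P_{\mathrm{total}}^2}\bigl(P_{\mathrm{total}}\nabla u - u\nabla P_{\mathrm{total}}\bigr).
\]
Expanding, the two terms proportional to $B\bar P_{\mathrm{bt}}\,u\nabla u$ cancel. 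This leaves $\tilde P\nabla u - 2uN_0\,\boldsymbol{\alpha}^{-1}\odot\boldsymbol{\theta}$, which is \eqref{eq:grad_Ee}. This cancellation is the only non-mechanical point in this part: it is why the result carries $\tilde P$ rather than $P_{\mathrm{total}}$ in front of $\nabla u$.

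\emph{Gradient of $u$.} Use $\mathrm{SE}_k = \frac{\tau_c-\tau_p}{\tau_c\ln 2}\ln\bigl((A_k^2+I_k)/I_k\bigr)$. Its gradient is
\[
\frac{\tau_c-\tau_p}{\tau_c\ln 2}\left(\frac{\nabla A_k^2+\nabla I_k}{A_k^2+I_k} - \frac{\nabla I_k}{I_k}\right).
\]
Place both fractions over the common denominator $(A_k^2+I_k)I_k$. The $\nabla I_k$ terms combine to $-A_k^2\nabla I_k/\bigl(I_k(A_k^2+I_k)\bigr)$, which equals $-\mathrm{SINR}_k\nabla I_k/(A_k^2+I_k)$. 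Summing over $k$ then yields \eqref{eq:grad_u}.

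\emph{Entrywise formulas for $\nabla A_k^2$, $\nabla I_k$, $\nabla g_k$.}
\begin{itemize}
\item $A_k$ depends only on the column $t=k$, so $[\nabla A_k^2]_{m,t} = 2a_{m,k}A_k\mathbf 1_{\{t=k\}}$.
\item The term $A_{k,t}^2$ depends only on column $t$, which gives $2b_{m,k,t}A_{k,t}$ for $t\in\mathcal P_k\setminus\{k\}$. The quadratic term gives $2d_{m,k,t}\theta_{m,t}$. Together these form $[\nabla I_k]_{m,t}$.
\item For $g_k=\sqrt{\bar S_k^{\min}I_k}-A_k$, the chain rule gives $\nabla g_k = \frac{\sqrt{\bar S_k^{\min}}}{2\sqrt{I_k}}\nabla I_k - \nabla A_k$. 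This is well defined because $I_k\ge 1>0$.
\end{itemize}

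\emph{Gradient of the penalty.} The function $h(x)=[\max(0,x)]^2$ is continuously differentiable with $h'(x)=2\max(0,x)$. This holds at $x=0$ too, since both one-sided derivatives vanish there. Hence $\nabla h(g_k) = 2\max(0,g_k)\nabla g_k$. This is nonzero only when $g_k>0$, so summing over $k$ gives \eqref{eq:grad_Psi}.

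\emph{Main obstacle.} The main subtlety is differentiability rather than algebra, and it arises in two places.
\begin{itemize}
\item The kink of $\max(0,\cdot)$ at $g_k=0$ is harmless, because squaring makes $h$ a $C^1$ function, as shown above.
\item Boundary points where $\theta_{l,k}=0$ are also harmless. Every quantity ($A_k$, $A_{k,t}$, $I_k$, $\tilde P$) is polynomial in $\boldsymbol{\theta}$, which is exactly why the substitution $\theta_{l,k}=\sqrt{\rho_{l,k}}$ was introduced. The formulas therefore hold on all of $\mathbb R^{L\times K}$, including the boundary of $\mathcal C$.
\end{itemize}
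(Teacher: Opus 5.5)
Your proposal is correct and follows essentially the same route as the paper's appendix: entrywise chain rule for $A_k^2$, $I_k$, and $g_k$; the quotient rule for $\mathrm{E_e}=Bu/P_{\mathrm{total}}$, where the $B\bar P_{\mathrm{bt}}\,u\nabla u$ terms cancel and leave $\tilde P$; and $h'(x)=2\max(0,x)$ for the penalty. The differences are minor: you split $\log_2(1+A_k^2/I_k)$ as $\ln(A_k^2+I_k)-\ln I_k$ rather than using the quotient rule, and you add explicit justifications ($C^1$-smoothness of the squared hinge at $0$, $I_k\ge 1$, smoothness at $\theta_{l,k}=0$) that the paper leaves implicit.
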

\begin{IEEEproof}
See Appendix~\ref{app:gradient_derivation}.
\end{IEEEproof}

With the closed-form expressions in Theorem~\ref{thm:gradient} and the projection in~\eqref{eq:proj}, the gradient and projection steps in APG can be evaluated explicitly. However, the performance and convergence  of the APG method depend on the step sizes $\alpha_{y,n}$ and $\alpha_{\theta,n}$. Manual tuning may lead to suboptimal convergence, while line search with backtracking requires repeated objective evaluations and increases computational complexity, particularly in massive MIMO systems. This results in high and variable computational cost. To address this issue, we propose mapping the APG iterations into a fixed number of neural network layers, which allows  learning $\alpha_{y,n}$ and $\alpha_{\theta,n}$ from data training, as elaborated next.

\subsection{Proposed APG-based Unfolding Network}
\label{subsec:unfolded}

We employ a deep-unfolded network with $T$ layers, each performs the update of one APG iteration, as  in~\eqref{eq:accelarated_point}–\eqref{eq:selection_power}. The step sizes $\alpha_{y,n}$ and $\alpha_{\theta,n}$ and penalty parameter $\xi$ are treated as learnable variables, eliminating line search and yielding fixed inference complexity.

\subsubsection{Network Structure}

Fig.~\ref{fig:apg_architecture} illustrates the structure of one layer of the proposed deep-unfolded APG network. Each layer corresponds to an iteration of the APG method in \eqref{eq:accelarated_point}--\eqref{eq:selection_power}. 
Within a layer, $\vy^{(t)}$ is formed from $\boldsymbol{\theta}^{(t)}$, $\boldsymbol{\theta}^{(t-1)}$, and $\vz^{(t)}$. Then, the updates $\vz^{(t+1)}$ and $\vv^{(t+1)}$ are computed based on $\vy^{(t)}$ and $\boldsymbol{\theta}^{(t)}$, respectively. Each update evaluates the gradient of $f_{\xi_t}$, scales it with the learnable step sizes $\alpha_{y,t}$ and $\alpha_{\theta,t}$, and projects the result onto the feasible set $\mathcal{C}$.
The key difference from the APG method lies in how the next iterate is obtained. In \eqref{eq:selection_power}, the update with the larger objective value is selected through a comparison step. In the proposed network, this step is replaced by a learnable weighted combination
\begin{equation}
    \boldsymbol{\theta}^{(t+1)}
    = w_t\,\vz^{(t+1)}
    + (1 - w_t)\,\vv^{(t+1)},
    \label{eq:weighted_combination}
\end{equation}
where $w_t \in (0,1)$ is a learnable parameter. This replacement eliminates the non-differentiable selection step in~\eqref{eq:selection_power}, enabling end-to-end training, and ensures $\boldsymbol{\theta}^{(t+1)} \in \mathcal{C}$ since $\vz^{(t+1)}, \vv^{(t+1)} \in \mathcal{C}$.

\begin{figure}[htbp]
    \centering
    \includegraphics[width=\linewidth]{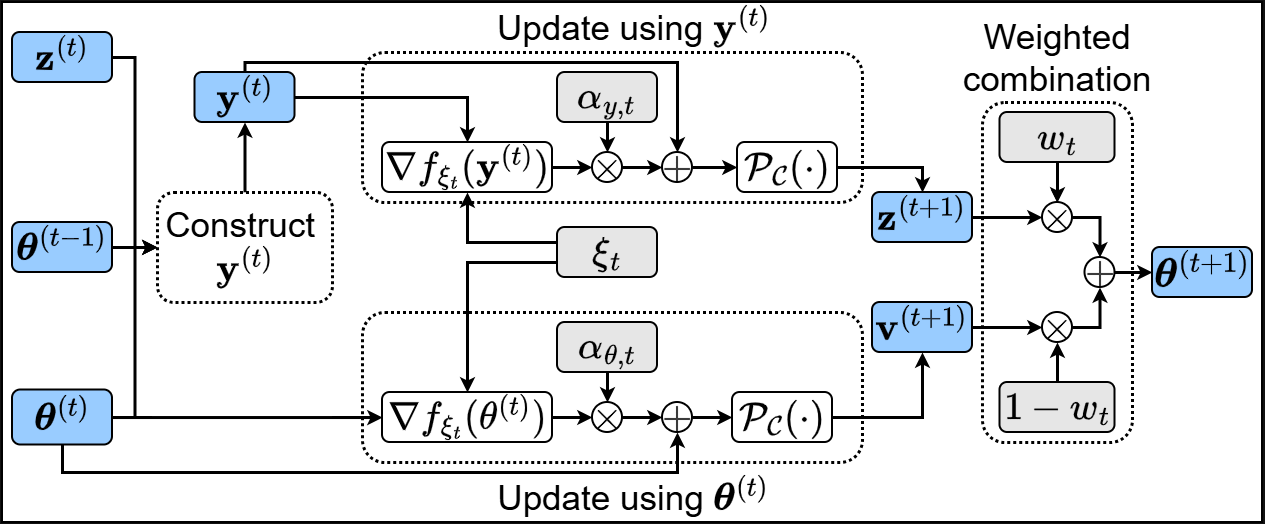}
    \caption{Illustration of the $t$-th layer of the proposed unfolded PGA model.% \Nhan{the text in the figure is too small to read.}
    }
    \label{fig:apg_architecture}
\end{figure}
% =========================================================

% =========================================================

\subsubsection{Learnable Parameters and Training}
% In the APG method, the step sizes $\alpha_{y,t}$ and $\alpha_{\theta,t}$ are determined via line search, the update is selected by comparison in~\eqref{eq:selection_power}, and the penalty parameter $\xi$ is adjusted across iterations according to a predefined schedule. \Nhan{we don't need to repeat this all the time.} 
In the proposed network, the step sizes $\alpha_{y,n}$, $\alpha_{\theta,n}$, penalty parameter $\xi$ and  the update in~\eqref{eq:selection_power} are replaced by a set of learnable parameters
\begin{equation}
    \boldsymbol{\Phi}
    =
    \bigl\{
    \tilde{\alpha}_{y,t},\, \tilde{\alpha}_{\theta,t},\, \xi_t,\, w_t
    \bigr\}_{t=1}^{T},
\end{equation}
where $\tilde{\alpha}_{y,t}, \tilde{\alpha}_{\theta,t}, \xi_t > 0$ and $w_t \in (0,1)$. 
% This replacement eliminates line search and yields fixed per-layer computational cost.
The step sizes $\alpha_{y,t}$ and $\alpha_{\theta,t}$ are constructed by applying learnable scaling factors $\tilde{\alpha}_{y,t}$ and $\tilde{\alpha}_{\theta,t}$ to the Barzilai–Borwein (BB) estimates, given by \cite{mai2022apg} % \Nhan{any reference?}
\begin{align}
    \alpha_{\mathrm{BB},z}^{(t)}
    =
    \frac{\|\vy^{(t)} - \vy^{(t-1)}\|^2}
         {\bigl|\bigl\langle
             \vy^{(t)} - \vy^{(t-1)},\;
             \nabla f_{\xi_t}(\vy^{(t)}) - \nabla f_{\xi_t}(\vy^{(t-1)})
         \bigr\rangle\bigr|},
    \\
    \alpha_{\mathrm{BB},v}^{(t)}
    =
    \frac{\|\boldsymbol{\theta}^{(t)} - \boldsymbol{\theta}^{(t-1)}\|^2}
         {\bigl|\bigl\langle
             \boldsymbol{\theta}^{(t)} - \boldsymbol{\theta}^{(t-1)},\;
             \nabla f_{\xi_t}(\boldsymbol{\theta}^{(t)}) - \nabla f_{\xi_t}(\boldsymbol{\theta}^{(t-1)})
         \bigr\rangle\bigr|}.
\end{align}
Since the BB estimate requires two consecutive iterates, it is not defined at $t=1$. The resulting step sizes are given by
\begin{equation}
\alpha_{y,t} =
\begin{cases}
\tilde{\alpha}_{y,t}, & t=1,\\
\tilde{\alpha}_{y,t}\,\alpha_{\mathrm{BB},z}^{(t)}, & t>1,
\end{cases}
\quad
\alpha_{\theta,t} =
\begin{cases}
\tilde{\alpha}_{\theta,t}, & t=1,\\
\tilde{\alpha}_{\theta,t}\,\alpha_{\mathrm{BB},v}^{(t)}, & t>1.
\label{eq:update_step}
\end{cases}
\end{equation}

The network is trained to maximize EE while enforcing the QoS constraints. For each layer $t$, the loss is defined as
\begin{equation}
    \mathcal{L}^{(t)}
    =
    -\mathrm{E_e}\!\bigl(\boldsymbol{\theta}^{(t)}\bigr)
    + \xi_{\mathrm{fix}}\,\Psi\!\bigl(\boldsymbol{\theta}^{(t)}\bigr),
\end{equation}
where $\xi_{\mathrm{fix}} > 0$ penalizes QoS violations. The parameter $\xi_t$ is used within each layer for gradient computation, while $\xi_{\mathrm{fix}}$ is fixed and used only in the training loss.
A progressive learning strategy~\cite{shlezinger2025deep} is adopted, where layers are optimized sequentially from $t=1$ to $T$. At layer $t$, only its
% \Nhan{its here mean ``layer $T$". Is this what you mean?} 
parameters are updated by minimizing $\mathcal{L}^{(t)}$, while earlier layers are kept fixed, which improves training stability.

% =========================================================
\subsubsection{Inference Procedure}

The inference procedure of the proposed unfolded APG-based power control scheme is summarized in Algorithm~\ref{alg:inference}. For a given channel realization, the power allocation is obtained via a $T$-layer forward pass using the trained parameters $\boldsymbol{\Phi}^{\star}$, as illustrated in Fig.~\ref{fig:apg_architecture}. The network is initialized using the heuristic channel-dependent (HCD) power allocation~\cite{interdonato2020local}, defined as
\begin{equation}
    \rho_{l,k}^{(0)}
    =
    \frac{\gamma_{l,k}}{\sum_{i=1}^{K}\gamma_{l,i}} \rho_l^{\max},
    \qquad
    \theta_{l,k}^{(0)} = \sqrt{\rho_{l,k}^{(0)}},
    \quad \forall l,k,
    \label{eq:s32_hcd_init}
\end{equation}
which allocates power proportionally to the large-scale fading coefficients $\{\gamma_{l,k}\}$.
% and provides a feasible starting point with low computational cost
% The final output $\boldsymbol{\theta}^{(T)}$ satisfies all constraints, since $\mathcal{P}_{\mathcal{C}}$ enforces feasibility at each update and the weighted combination in \eqref{eq:weighted_combination} preserves feasibility.

\setlength{\textfloatsep}{7pt}
\begin{algorithm}[t]
\small
\caption{Unfolded APG-based power control}
\label{alg:inference}
\LinesNumbered
\KwIn{$\{\beta_{l,k},\gamma_{l,k},\delta_{l,k}\}$, $\{S^{\mathrm{min}}_k\}$, $\{\rho_l^{\max}\}$, and trained parameters $\boldsymbol{\Phi}^{\star} = \{\tilde{\alpha}_{y,t}, \tilde{\alpha}_{\theta,t}, \xi_t, w_t\}_{t=1}^T$.}
\KwOut{Power allocation $\boldsymbol{\theta}^{(T+1)} \in \mathcal{C}$.}
\textbf{Initialization:} \\
Precompute $\{s_t\}_{t=1}^T$\;
Initialize $\boldsymbol{\theta}^{(0)}$ using HCD in~\eqref{eq:s32_hcd_init}\;
\For{$t = 1$ \KwTo $T$}{
    Compute $\vy^{(t)}$ using~\eqref{eq:accelarated_point}\;
    Compute $\nabla f_{\xi_t}(\vy^{(t)})$ and $\nabla f_{\xi_t}(\boldsymbol{\theta}^{(t)})$ using Theorem~\ref{thm:gradient}\;
    Compute step sizes $\alpha_{y,t}$ and $\alpha_{\theta,t}$ using~\eqref{eq:update_step}\;
    Update $\vz^{(t+1)}$ and $\vv^{(t+1)}$ using~\eqref{eq:z}-\eqref{eq:v}\;
    Update $\boldsymbol{\theta}^{(t+1)}$ 
    using~\eqref{eq:weighted_combination}\;
}
\Return $\boldsymbol{\theta}^{(T+1)}$\;
\end{algorithm}
% =========================================================
\subsubsection{Computational Complexity}

We herein analyze the computational complexity of Algorithm~\ref{alg:inference}. In each layer, the dominant cost arises from evaluating the gradients $\nabla f_{\xi_t}$, which has complexity $\mathcal{O}(L K^2)$ due to summations over AP and UE indices. The projection and combination steps have complexity $\mathcal{O}(L K)$. Hence, the per-layer complexity is $\mathcal{O}(L K^2)$, and the overall complexity of the $T$-layer network is $\mathcal{O}(T L K^2)$. The HCD initialization has complexity $\mathcal{O}(L K)$ and does not affect the overall order.
In contrast, the APG algorithm has complexity $\mathcal{O}(I_P I_{\mathrm{APG}} L K^2)$, where $I_P$ and $I_{\mathrm{APG}}$ denote the numbers of outer and inner iterations, respectively~\cite{mai2022apg}. The SCA method in~\cite{ngo2017total} has complexity $\mathcal{O}(I_{\mathrm{SCA}} \sqrt{L + K} L^3 K^4)$, where $I_{\mathrm{SCA}}$ is the number of iterations. In both cases, the iteration counts depend on the channel realizations and convergence criteria, resulting in variable computational cost, whereas the proposed method has fixed complexity determined by $T$.

\section{Numerical Results}

\begin{figure*}[t!]
    \centering
    \begin{subfigure}[b]{0.32\textwidth}
        \centering
        \includegraphics[width=\textwidth]{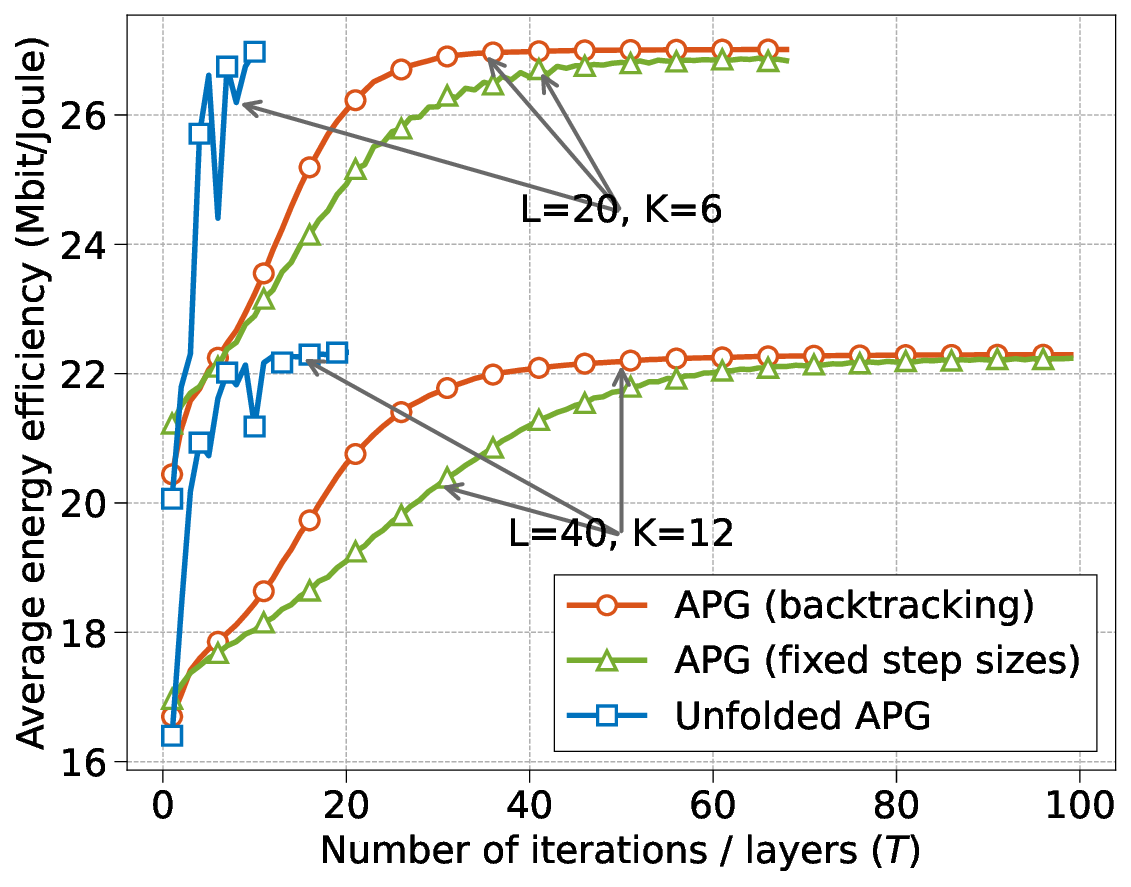}
        \caption{Convergence profile}
        \label{fig:convergence}
    \end{subfigure}
    \hfill
    \begin{subfigure}[b]{0.32\textwidth}
        \centering
        \includegraphics[width=\textwidth]{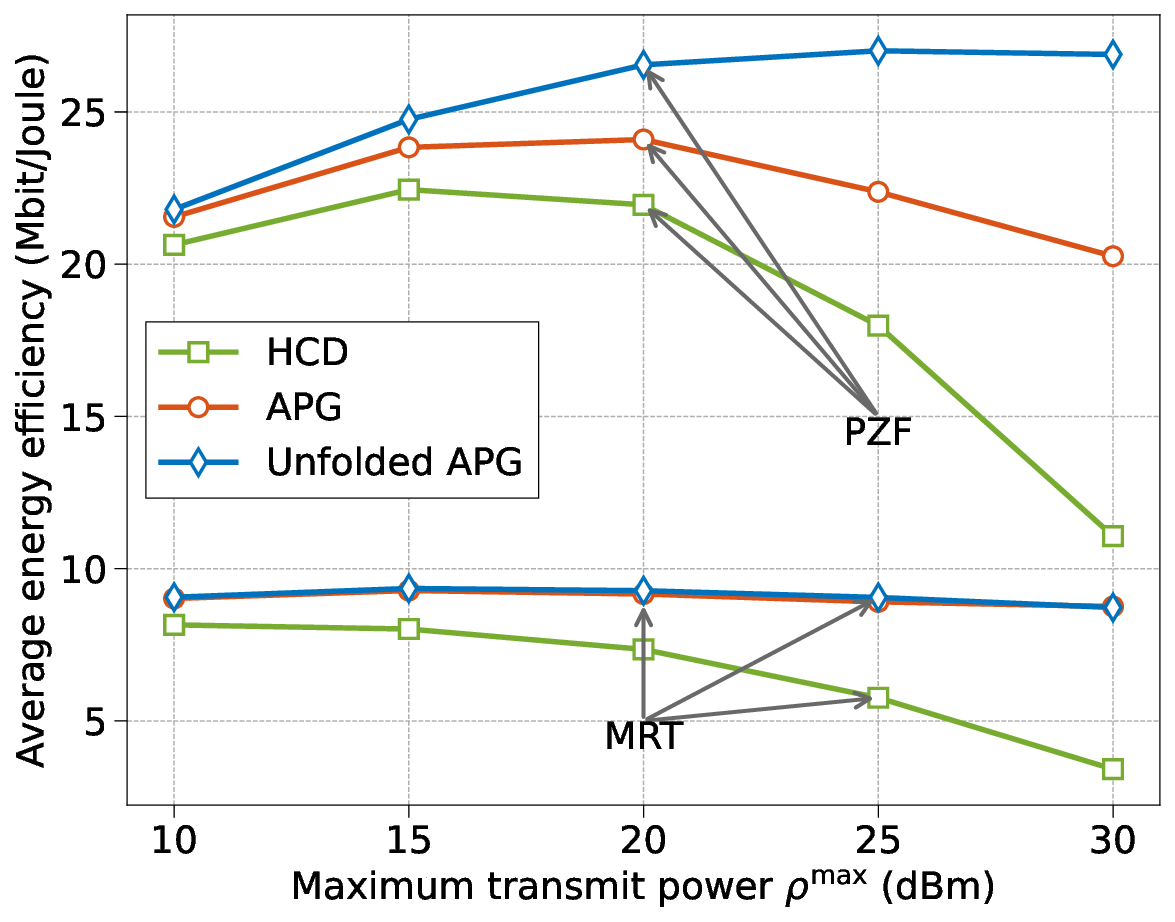}
        \caption{EE versus maximum transmit power.}
        \label{fig:ee_vs_rho}
    \end{subfigure}
    \hfill
    \begin{subfigure}[b]{0.32\textwidth}
        \centering
        \includegraphics[width=\textwidth]{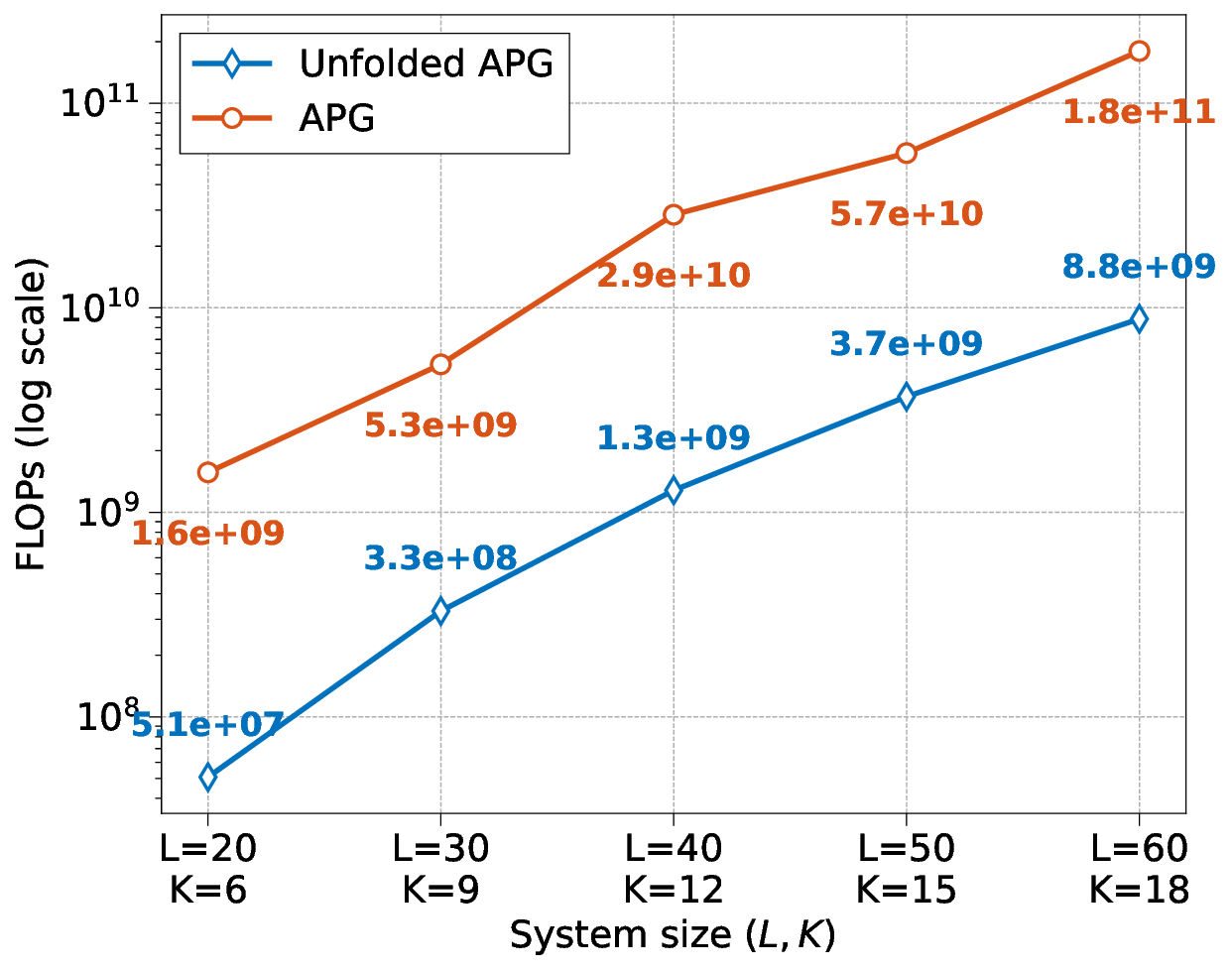}
        \caption{Computational cost versus system size $(L, K)$.}
        \label{fig:computation}
    \end{subfigure}
    \caption{Convergence, EE, and computational cost of the proposed unfolded APG scheme compared with benchmarks.}
    \label{fig:all_results}
\end{figure*}
This section evaluates the performance of the proposed deep-unfolded power control method. Unless otherwise specified, we consider a downlink CF-mMIMO system with $L=20$ APs, each equipped with $M=4$ antennas, serving $K=6$ single-antenna UEs. The APs and UEs are randomly distributed in a square area of side length $D=1000$ m using the wrap-around scheme.

The large-scale fading coefficients are modeled as
$
\beta_{l,k} = \mathsf{PL}_{l,k} \cdot 10^{\frac{\sigma_\text{sh} z_{l,k}}{10}},
$
where $\sigma_\text{sh}=4$ dB, and $z_{l,k}\sim\mathcal{N}(0,1)$ follows the correlation model in~\cite{interdonato2020local}. The path loss is given by
$
\mathsf{PL}_{l,k}\,[\mathrm{dB}] = -30.5 - 36.7\log_{10}\!\left(\frac{d_{l,k}}{1\,\mathrm{m}}\right),
$
where $d_{l,k}$ denotes the distance between AP~$l$ and UE~$k$, with antenna heights of $10$ m and $1.5$ m for the AP and UE, respectively~\cite{interdonato2020local}.
The system parameters are set as follows: bandwidth $B=20$ MHz, coherence block length $\tau_c=200$,  pilot length $\tau_p=5$, QoS target $S^{\mathrm{min}}_k=1$ bit/s/Hz, and noise power $N_0$ corresponding to a noise figure of $9$ dB~\cite{ngo2017total, mai2022apg}. Pilots are randomly assigned to the UEs. For the EE model in~\eqref{eq:Ptotal}, we set $\alpha_l=0.4$, $P_{\mathrm{tc},l}=0.2$ W, $P_{\mathrm{bt},l}=0.25$~W/(Gbit/s), and $P_{0,l}=0.825$~W~\cite{ngo2017total, mai2022apg}. 
The unfolded network has depth $T=10$ and is trained using Adam optimizer. The dataset consists of 800 training, 100 validation, and 100 testing channel realizations. Training is performed for 1000 epochs with mini-batch size 32 and learning rate 0.01. The parameters $\lambda_0$ and $\xi_{\mathrm{fix}}$ are set to 0.3 and 10, respectively.

To evaluate the proposed \textit{unfolded APG} method, we consider the heuristic \textit{HCD} method~\cite{interdonato2020local,bjornson2020scalable}, a backtracking-based \textit{APG} method in which the step sizes are determined via backtracking line search~\cite{mai2022apg}, and an \textit{APG} variant with fixed step sizes set to the average values obtained from the backtracking-based APG. All methods are evaluated under both PZF and MRT precoding schemes.

Fig.~\ref{fig:convergence} shows the convergence behavior of the proposed unfolded APG and the iterative APG methods for $(L=20, K=6)$ and $(L=40, K=12)$. The results indicate that the objective value of the proposed unfolded APG increases rapidly and attains a satisfactory EE performance, which is slightly superior to that of the APG-based counterparts at convergence. The backtracking-based APG converges within approximately 30 iterations, whereas the fixed-step APG requires more than 60 iterations to reach a stable solution. Compared with the fixed-step variant, APG with backtracking converges faster and achieves higher EE, at the expense of additional computational overhead due to the line search procedure. In contrast, the proposed method achieves fast convergence without introducing extra per-iteration complexity.

% Fig.~\ref{fig:ee_vs_rho} shows the EE versus the maximum transmit power for different methods \Nhan{I commented many times that this writing is not clear. Instead of saying ``different methods", you can say ``the compared method", ``the conventional APG approaches/counterparts", ``the APG with backtracking and fixed step sizes", ... Many way to make it clear, why you choose the unclear way? I will not fix this kind of mistake again (as well as the one like ``two updates are computed") in your future works.}. 
Fig.~\ref{fig:ee_vs_rho} shows the EE versus the maximum transmit power for the proposed unfolded APG, the iterative APG, and the HCD schemes.
For a fair comparison, APG is executed with $T=10$ iterations, matching the number of layers in the unfolded model. The unfolded APG consistently achieves higher EE than APG and HCD across all transmit power levels for both PZF and MRT precoding. For MRT, its performance is close to that of APG under the same number of APG iterations,
% \Nhan{???}, 
indicating that APG already reaches a near-stationary performance level within 10 iterations.
The EE increases with transmit power in the low-power regime, reaches a maximum around 20–25 dBm, and then saturates or decreases. This indicates that increasing transmit power does not necessarily improve EE and highlights the importance of power control. Compared to MRT, PZF achieves significantly higher EE under all settings; for example, at $\rho^{\max}=25$ dBm, PZF provides an EE gain of approximately 198\%. In addition, the unfolded APG maintains more stable performance at higher transmit power levels, whereas APG and HCD exhibit noticeable degradation.

Fig.~\ref{fig:computation} compares the computational cost, measured in floating-point operations (FLOPs), of iterative APG and unfolded APG for different system sizes $(L,K)$. 
% For the unfolded method, the number of layers $T$ is increased with system size to maintain performance (e.g., $T=15$ for $(30,9)$ and $T=30$ for $(60,18)$).
% \Nhan{What are $T$ for other setups? You can write something like "For $L = \{20, ..., 60\}$, we set $T = \{15, 20, 25..., 30\}$, respectively. Another question: How is $T$ set for APG? or do you run it until convergence? You should mention here. Importantly, you should confirm here that the complexities of the 2 methods are compared under similar performance. Otherwise, it doesn't make sense to compare the complexity.}
For L = \{30, 40, 50, 60\}, the number of layers is set to T = \{15, 20, 25, 30\}, respectively. The iterative APG is executed until convergence, defined by a relative change in the objective value below $10^{-3}$. Notably, both methods achieve comparable EE performance under the considered settings, ensuring a fair comparison of computational complexity.
The proposed method consistently requires significantly fewer FLOPs than iterative APG across all configurations. For example, at $(L,K)=(20,6)$, iterative APG requires approximately $30$ times more FLOPs. Although $T$ increases with system size, the complexity of the unfolded method grows moderately. In contrast, larger systems lead to more challenging optimization problems, and iterative APG typically requires more iterations to converge, resulting in a faster increase in computational cost. Consequently, the complexity gap widens as $(L,K)$ increases, demonstrating the scalability of the proposed approach.

\section{Conclusion}
This paper proposed a deep-unfolded APG method for EE maximization in downlink CF-mMIMO systems under QoS and per-AP power constraints. We first developed an APG-based framework for this problem. To reduce the computational cost of the APG framework, we proposed a deep-unfolded approach by mapping the iterative updates into a finite number of trainable layers, where the step sizes and penalty parameters are learned from data, eliminating the need for line search or manual tuning. Numerical results show that the proposed method achieves EE performance comparable to iterative APG while requiring significantly lower computational cost, indicating its scalability for CF-mMIMO systems.
%=============================================================================
\appendices
\section{Proof of Theorem~\ref{thm:gradient}}
\label{app:gradient_derivation}
% =============================================================================

We derive $\nabla f_\xi = \nabla \mathrm{E_e} - \xi \nabla\Psi$ by differentiating each term with respect to $\theta_{m,t}$ of $\boldsymbol{\theta} \in \mathbb{R}^{L \times K}$.
% , where $[\nabla(\cdot)]{m,t} = \partial(\cdot)/\partial \theta{m,t}$. 
% We use the shorthand $a_{l,k}$, $b_{l,k,t}$, and $d_{l,k,t}$ as defined in~\eqref{eq:Ak}-\eqref{eq:dlkt}.

\subsection*{A.\; Gradients of $A_k^2$ and $I_k$}

Since $A_k = \sum_l a_{l,k}\theta_{l,k}$ depends only on column~$k$ of $\boldsymbol{\theta}$, the chain rule gives
\begin{equation}
    [\nabla A_k^2]_{m,t}
    = 2\,a_{m,k}\,A_k \cdot \mathbf{1}_{\{t=k\}},
    \label{eq:app_grad_DS}
\end{equation}
where $\mathbf{1}_{\{\cdot\}}$ takes the value $1$ if the condition is satisfied and $0$ otherwise.
% The interference $I_k$ %in~\eqref{eq:Ik} 
% has two variable parts: the pilot-contamination sum and the non-coherent interference; the constant~$1$ vanishes under differentiation.
Since $A_{k,t} = \sum_l b_{l,k,t}\theta_{l,t}$ for $t \in \mathcal{P}_k\setminus\{k\}$, applying the chain rule to $A_{k,t}^2$ and the power rule to $d_{l,k,t}\theta_{l,t}^2$ yields
\begin{equation}
    [\nabla I_k]_{m,t}
    = 2\,b_{m,k,t}\,A_{k,t}\cdot\mathbf{1}_{\{t\in\mathcal{P}_k\setminus\{k\}\}}
    + 2\,d_{m,k,t}\,\theta_{m,t}.
    \label{eq:app_grad_Ik}
\end{equation}

\subsection*{B.\; Gradient of the sum SE ($\nabla u$)}

The $\mathrm{SE}_k = \frac{\tau_c-\tau_p}{\tau_c} \log_2\!(1+A_k^2/I_k)$. Applying the chain rule of $\log_2(1+x)$ and the quotient rule on $A_k^2/I_k$, 
% then using $(1+A_k^2/I_k)\cdot I_k = A_k^2+I_k$:
% which separates the contributions of the signal and interference terms, 
yields
\begin{equation}
    \frac{\partial \mathrm{SE}_k}{\partial\theta_{m,t}}
    = \frac{\tau_c-\tau_p}{\tau_c\ln 2\,(A_k^2+I_k)}
    \bigl([\nabla A_k^2]_{m,t}
    - \mathrm{SINR}_k\,[\nabla I_k]_{m,t}\bigr),
    \label{eq:app_grad_Se}
\end{equation}
where $\mathrm{SINR}_k = A_k^2/I_k$. Summing~\eqref{eq:app_grad_Se} over all $k$ gives~\eqref{eq:grad_u}.

\subsection*{C.\; Gradient of the energy efficiency ($\nabla \mathrm{E_e}$)}

From~\eqref{eq:Ptotal}, $P_{\mathrm{total}} = \tilde{P} + B\bar{P}_{\mathrm{bt}}u$, where $\tilde{P} = \bar{P}_{\mathrm{fix}} + \sum_{l,k}(N_0/\alpha_l)\theta_{l,k}^2$ is independent of $u$. The power rule gives $[\nabla\tilde{P}]_{m,t} = 2(N_0/\alpha_m)\theta_{m,t} = 2N_0[\boldsymbol{\alpha}^{-1}]_{m,t}\,\theta_{m,t}$, so
\begin{equation}
    \nabla P_{\mathrm{total}}
    = 2N_0\boldsymbol{\alpha}^{-1}\!\odot\boldsymbol{\theta}
    + B\bar{P}_{\mathrm{bt}}\nabla u.
    \label{eq:app_grad_Ptotal}
\end{equation}
Applying the quotient rule to $\mathrm{E_e} = Bu/P_{\mathrm{total}}$ and substituting~\eqref{eq:app_grad_Ptotal}, then collecting $\nabla u$ terms with coefficient $P_{\mathrm{total}} - B\bar{P}_{\mathrm{bt}}u = \tilde{P}$ yields~\eqref{eq:grad_Ee}.

\subsection*{D.\; Gradient of the penalty ($\nabla\Psi$)}

From~\eqref{eq:gk}, $g_k = \sqrt{\bar{S}^{\mathrm{min}}_{k}\,I_k} - A_k$. Applying the chain rule $\partial\sqrt{I_k}/\partial\theta_{m,t} = [\nabla I_k]_{m,t}/(2\sqrt{I_k})$, substituting~\eqref{eq:app_grad_Ik}, and differentiating $-A_k$ using~\eqref{eq:app_grad_DS}, we obtain
\begin{align}
    [\nabla g_k]_{m,t}
    &= \frac{\sqrt{\bar{S}^{\mathrm{min}}_{k}}}{\sqrt{I_k}}
    \Bigl(b_{m,k,t}A_{k,t}\cdot\mathbf{1}_{\{t\in\mathcal{P}_k\setminus\{k\}\}}
    + d_{m,k,t}\theta_{m,t}\Bigr) \nonumber \\
    &\quad
    - a_{m,k}\cdot\mathbf{1}_{\{t=k\}}.
    \label{eq:app_grad_gk}
\end{align}
Since $h(x)=[\max(0,x)]^2$ has derivative $h'(x)=2[x]_+$, the chain rule $\nabla h(g_k) = h'(g_k)\nabla g_k = 2g_k\nabla g_k$ (for $g_k>0$) gives
\begin{equation}
    \nabla\Psi = \sum_{k:\,g_k>0}2\,g_k\,\nabla g_k,
    \label{eq:app_grad_Psi}
\end{equation}
which is~\eqref{eq:grad_Psi}. Substituting~\eqref{eq:grad_Ee} and~\eqref{eq:app_grad_Psi} into $\nabla f_\xi = \nabla \mathrm{E_e} - \xi\nabla\Psi$ yields the desired result.

% =============================================================================

\ifCLASSOPTIONcaptionsoff
  \newpage
\fi

\bibliographystyle{IEEEtran}
\bibliography{bibtex/bib/IEEEabrv,bibtex/bib/globecom}

\end{document}